\title{Quantization using Compressive Sensing}
\author{\IEEEauthorblockN{Rajiv Soundararajan and Sriram Vishwanath}
\IEEEauthorblockA{Department of Electrical and Computer Engineering\\ The University of Texas at Austin\\
1 University Station C0803, Austin, TX 78712, USA\\
Email: rajiv.s@mail.utexas.edu and sriram@ece.utexas.edu}}
\begin{document}

\maketitle

\begin{abstract}
The problem of compressing a real-valued sparse source using compressive sensing techniques is studied. The rate distortion optimality of a coding scheme in which compressively sensed signals are quantized and then reconstructed is established when the reconstruction is also required to be sparse. The result holds in general when the distortion constraint is on the expected $p$-norm of error between the source and the reconstruction. A new restricted isometry like property is introduced for this purpose and the existence of matrices that satisfy this property is shown. 
\end{abstract}

\section{Introduction}

In recent years, there has been an explosion in the number of applications to which compressed (or compressive) sensing has been applied \cite{Duarte2008,Vikalo2008}. From image and video capture to microarrays and other applications represents just a sub-spectrum of its possible uses. An important application that seems particularly promising in terms of being translated into practical circuit designs is {\em quantization}. If the original signal to be compressed is $n$ dimensional but is $k$-sparse (i.e., has at most $k$ non-zero entries), and $k << n$, then there is a significant benefit in using a compressive sensing framework for quantization. Indeed, compressive sensing in itself represents a nearly-lossless linear transformation on the original source, and thus, compressive sensing is a ``good lossless" compression mechanism for sparse vectors, reducing the length of the representation of the original source from an $n$ dimensional vector to $m = \Theta(k \log \frac{n}{k})$ dimensional vector, which is an order-wise optimal lossless compression of the source. In particular, ``sampling" matrices $\Phi$ of dimension $m \times n$ have been shown to exist where
\begin{equation}
\label{eqn:cs}
y^m = \Phi x^n,
\end{equation}
such that the original source $x^n$ can be recovered losslessly with high probability. This is not surprising from a compression perspective, as optimal linear compressors are known to exist for lossless compression.

Our goal in this paper is to investigate if compressive sensing is good for lossy compression of continuous-valued sources. The answer to this question is not immediately obvious, as typical lossy compression  algorithms involve non-linear transformations between the source and its compressed equivalent in the encoding step. There are two ways in which compressive sensing can be combined with quantization. The first is where the number of samples $m$ in (\ref{eqn:cs}) is reduced from $\Theta(k \log \frac{n}{k})$ to an orderwise smaller quantity. The resulting lossy-compressed vector $y$ is then ``inverted" to obtain a compressed version of $x$. This approach is studied in detail for the case when $ k = \Theta(n)$ in \cite{Reeves2010}. The second is to maintain the lossless compression process as prescribed by (\ref{eqn:cs}) and then use a (possibly non-linear) quantizer on $y^m$ to obtain $\hat{y}^m$. Subsequently,  $\hat{y}^m$ is transformed using a suitable algorithm to obtain $\hat{x}^n$, a lossy-compressed version of $x^n$. This second approach has been studied recently in \cite{Baraniuk2009}, and it forms the main quantization scheme studied in this paper.

\begin{figure}\label{fig:coding}
\centering
\includegraphics[scale=0.5]{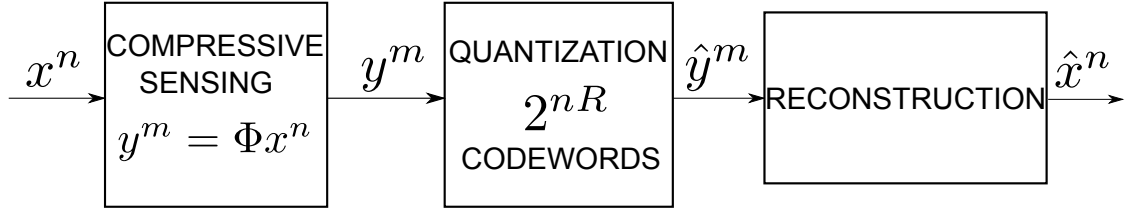} 
\caption{Compressive sensing followed by quantization}
\end{figure}

The chain comprising of sampling using compressive sensing, then quantization and finally, reconstruction of the compressed vector is depicted in Fig. \ref{fig:coding}. Associated with this framework are two notions of rate, the sampling rate and the compression rate. While the sampling rate is the rate at which the quantizer needs to the sample the incoming signal, the compression rate is the number of bits per symbol needed to represent the sampled signal within a fidelity criterion. This chain is particularly useful in developing A/D converters for sparse sources - it reduces the sampling rate at which they must operate thus making their design simpler and the quantization operation more effective. What we desire to know in this paper is if this  quantization mechanism, besides being practically efficient, is indeed optimal. In other words, if the source were to be directly quantized (using the best quantizer available), would it suffer a lower distortion than being first filtered in accordance with (\ref{eqn:cs}) and then compressed? Observe that the compression rate at which the quantizer in Fig. 1 operates is higher than the optimal quantizer so that the number of codeword indices (or the cardinality of the reconstruction alphabet) is kept equal. Mathematically, the compression rate of the quantizer in Fig.1 is $\log(2^{nR})/m$, while the optimal quantizer operates at a compression rate of $R$. The result of this paper may also be interpreted as trading off compression rate for sampling rate while still achieving the same distortion performance as the optimal quantizer. While there is prior literature in studying the performance of specific designs of Fig. 1 \cite{Goyal2009,Dai2010,Boufounos2010}, we prove a conclusive result on when the framework is optimal. 

In this paper, our focus is on those quantization applications where we desire the support of $\hat{x}^n$ and $x^n$ to be identical. This is especially important in applications where we desire that the quantization process not introduce ``spurious" signals. In sensing systems and other applications where the signal represents a change in state of the system, it is particularly important that the compression process retain the original (sparse) support of the original. A distortion in the sparsity pattern could lead to false activation resulting in undesirable consequences. The problem also has applications in DNA microarrays for cancer diagnosis, where a wrong sparsity pattern could lead to faulty diagnosis. 

Our main results are as follows:
\begin{enumerate}
\item The coding architecture in Fig. \ref{fig:coding} is distortion-rate optimal when the reconstruction is also required to be sparse.
\item We show this result when the distortion constraint is on any $p$-norm of the error between the source and reconstruction sequence, where $p\geq 1$. 
\item In order to prove such a general result, we study a modified restricted isometry property (RIP) for matrices and show the existence of matrices that satisfy this property. 
\end{enumerate}

The modified RIP introduced in this paper is essential in order to prove the distortion rate optimality for $p$-norm distortion measures. The proof of existence of matrices satisfying the modified RIP uses Hoeffding's inequality. Related to this work is the use of Hoeffding's inequality  to obtain RIP bounds in a recent paper \cite{Nazer2010}. Also related are results on heavy tailed restricted isometries in \cite{Vershynin2008} and RIP using tail bounds in \cite{Baraniuk2008}. While Hoeffding's inequality has been previously used in other contexts to obtain RIP bounds, this paper uses it to prove the modified RIP required for $p$-norm compression. 

The rest of this paper is organized as follows. In the next section, we describe the system model. In Section \ref{sec:mresult}, we state the main results of the paper. We conclude the paper in Section \ref{sec:conc}. The proofs of the results are detailed in the appendix. 


\section{System Model}\label{sec:sysmodel}
Consider the set ${\mathcal X}_k^n$ of all $k$-sparse vectors of length $n$ where each non-zero entry takes any value in ${\mathbb R}$. The goal is to compress the sparse and real-valued $x^n$ to a vector $\hat{x}^n$ within a distortion $D$. Note that in general, the rate distortion optimal quantizer does not ensure the reconstruction, $\hat{x}^n$, is sparse. Since we desire the support of $\hat{X}^n$ and $X^n$ be identical, we let $\hat{X}^n$ belong to a $k$-sparse reconstruction space denoted by $\hat{\mathcal{X}}^n_k$.  Let $T\subset\{1,2,\ldots,n\}$, be the indices such that $X_i\neq 0$ for $i\in T$. Observe that $T$ is a random set on account of $X^n$ being a random vector. Let $X^n(T)$ be the vector with components corresponding to indices in $T$. $X^n(T^c)$ is defined in a similar fashion. 

We begin by defining the distortion rate function of the optimal quantizer. Let $D_1(R)$ be the average distortion achieved by a code operating at rate $R$ for $k$-sparse source vectors distributed according to $X^n\sim p_{X^n}$. Mathematically,
\begin{align*}
D_1(R) = &\inf_{\hat{\mathcal{X}}^n_k}\frac{1}{n}\mathbb{E}\left[\lVert X^n-\hat{X}^n\rVert_p\right] \\
\textrm{subject to }& \lvert\hat{\mathcal{X}}^n_k\rvert \leq 2^{nR} \textrm{ and } X^n(T^c)=\hat{X}^n(T^c).
\end{align*}
We wish to point out out that the equality constraint in the optimization problem limits the reconstruction spaces to those that are $k$-sparse with the same sparsity pattern as the source. 

We now define the optimization problem concerning the quantizer in Fig. 1 followed by the distortion rate function of the compression scheme in Fig. 1. Let $Y^m=\Phi X^n$ and let $\hat{Y}^m=\Phi \hat{X}^n$ denote the quantized version of $Y^m$. Let $\Phi_i$ be the $i$-th row of a matrix $\Phi$ of dimension $m\times n$, $i=1,2,\ldots,m$ and $q\geq 1$ satisfy $\frac{1}{p}+\frac{1}{q}=1$. Define,
\begin{align*}
\Delta_2(R)=&\inf_{\hat{\mathcal{X}}^n_k}\frac{1}{n}\sum_{i=1}^m\frac{1}{m}\mathbb{E}\left[\frac{\lvert Y_i-\Phi_i \hat{X}^n\rvert}{\lVert\Phi_i\rVert_q}\right]\\
\textrm{subject to } &\lvert\hat{\mathcal{X}}^n_k\rvert \leq 2^{nR} \textrm{ and } X^n(T^c)=\hat{X}^n(T^c).
\end{align*}
The quantity defined above represents the distortion achieved in $Y^m$ corresponding to a particular distortion metric. Since, we force the quantizer to search over quantized versions of the form $\hat{Y}^m=\Phi \hat{X}^n$, the optimization is again carried out over the reconstruction spaces of the form $\hat{\mathcal{X}}^n_k$. Note that we depart from the usual convention of denoting the compression rate as the argument of the distortion rate function in the definition of $\Delta_2(R)$. The argument $R$ in $\Delta_2(R)$ denotes the fact that $2^{nR}$ indices are used for quantization and the compression rate is in fact $\log(2^{nR})/m$.  

Let $D_2(R)$ be the average distortion achieved in $X^n$ by the scheme consisting of compressive sensing followed by quantization and reconstruction. The chain shown in Fig. 1 uses $2^{nR}$ codewords at a compression rate $R$.  In the next section, we present our main result relating $D_2(R)$ and $D_1(R)$. 


\section{Main Results and Analysis}\label{sec:mresult}
We first briefly discuss the order wise optimality of compressed sensing for lossless compression before turning to the main results of this paper. Let us assume for this discussion alone that the non-zero entries of $X^n$ are discrete random variables belonging to some alphabet $\mathcal{X}$ with finite cardinality. Let $T$ denote the sparsity pattern of $X^n$, uniformly distributed among $\binom{n}{k}$ possiblities. Mathematically, $T=\{i:X_i\neq 0\}$. If $R$ is the rate of compression, we have,

\begin{align*}
nR &\geq H(\hat{X}^n)=I(X^n;\hat{X}^n)= H(X^n)-H(X^n|\hat{X}^n)\\
&\overset{(a)}{\geq}H(T,X_i,i\in T)-n\epsilon_n\\
& =H(T)+H(X_i,i\in T|T)-n\epsilon_n\\
&= k\log\frac{n}{k}+k\log e+k\log|\mathcal{X}|-n\epsilon_n.
\end{align*}

Here $(a)$ follows from Fano's inequality where $\epsilon_n\rightarrow 0$ as $n\rightarrow\infty$. Therefore compressed sensing is an order wise optimal lossless compression scheme. 
 
We now turn to the main results of the paper. We state the restricted isometry property (RIP) for matrices \cite{Candes2005}. A matrix $\Phi$, is said to satisfy the $(\epsilon,k)$-RIP  if $\forall x^n\in\mathcal{X}^n_k$,
\begin{equation}
\label{eqn:rip}
(1 - \epsilon) \lVert x^n \rVert_2 \le \lVert\Phi x^n\rVert_2 \le (1 + \epsilon) \lVert x^n \rVert_2.
\end{equation}

We now state a modified version of the RIP which is useful in proving the rate distortion result of the paper when the distortion constraint is on any $p$-norm on the error between the source and the reconstruction. A matrix $\Phi$, is said to satisfy the modified $(\epsilon,k)$-RIP  if $\forall x^n\in\mathcal{X}^n_k$, and $p\geq 1$, 
\begin{equation}
\label{eqn:mrip}
(1-\epsilon)\lVert x^n\rVert_p\leq \sum_{i=1}^m\frac{1}{m}\frac{\lvert\Phi_i x^n\rvert}{\lVert\Phi_i\rVert_q}\leq \lVert x^n\rVert_p(1+\epsilon)
\end{equation}

We show the existence of matrices satisfying the above modified RIP through the following theorem. 
\begin{theorem}\label{thm:mripproof}
Let $\Phi$ be a matrix of dimension $m\times n$ containing entries chosen i.i.d. and supported in $[C_1,C_2]$, where  $-\infty<C_1<C_2<\infty$. For every $\epsilon\in(0,1)$, if $m=\Theta(k\log\frac{n}{k})$, there exists a constant $c_2>0$ such that  with probability greater than $1-2e^{-mc_2}$, 
\begin{equation*}
\sup_{x^n\in \mathcal{X}_k^n}\left\lvert\sum_{i=1}^m\frac{1}{m} \frac{\lvert \Phi_i x^n\rvert}{\lVert x^n\rVert_p\lVert\Phi_i\rVert_q}-1\right\rvert < \epsilon.
\end{equation*}
\end{theorem}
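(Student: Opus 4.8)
The plan is the standard three-stage argument for restricted-isometry statements: pointwise concentration for a single fixed vector, promotion to a uniform bound over one fixed support via a covering net, and a union bound over all supports. First observe that the ratio inside the supremum is invariant under scaling of $x^n$, so it suffices to restrict to $\lVert x^n\rVert_p = 1$ and to control $\sup\bigl|\frac{1}{m}\sum_i Z_i - 1\bigr|$ where $Z_i := \lvert\Phi_i x^n\rvert/\lVert\Phi_i\rVert_q$. H\"older's inequality with $\frac{1}{p}+\frac{1}{q}=1$ gives $\lvert\Phi_i x^n\rvert \le \lVert\Phi_i\rVert_q\lVert x^n\rVert_p$, so each $Z_i \in [0,1]$; since the rows $\Phi_i$ are i.i.d., so are the $Z_i$. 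For a fixed $x^n$, Hoeffding's inequality applied to these bounded independent variables yields $\Pr\bigl(\lvert\frac{1}{m}\sum_i Z_i - \mathbb{E}Z_1\rvert > \delta\bigr) \le 2e^{-2m\delta^2}$.

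Next I would pass from a single vector to all unit vectors supported on a fixed $T\subset\{1,\dots,n\}$ with $\lvert T\rvert = k$. The crucial structural fact is that $x^n \mapsto \frac{1}{m}\sum_i Z_i$ is $1$-Lipschitz with respect to $\lVert\cdot\rVert_p$: by the reverse triangle inequality and H\"older, $\bigl|\lvert\Phi_i x^n\rvert - \lvert\Phi_i \tilde{x}^n\rvert\bigr|/\lVert\Phi_i\rVert_q \le \lvert\Phi_i(x^n - \tilde{x}^n)\rvert/\lVert\Phi_i\rVert_q \le \lVert x^n - \tilde{x}^n\rVert_p$. Hence a $\delta$-net of the $\ell_p$ unit sphere of the $k$-dimensional coordinate subspace indexed by $T$, which may be taken of cardinality at most $(3/\delta)^k$, controls the supremum over that subspace up to an additive $O(\delta)$. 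Applying the Hoeffding bound at each net point and union bounding over the net gives a uniform bound over support $T$ with failure probability at most $2(3/\delta)^k e^{-2m\delta^2}$. Finally I would union bound over the $\binom{n}{k} \le (en/k)^k$ choices of $T$: the total failure probability is then at most $2(3/\delta)^k (en/k)^k e^{-2m\delta^2}$, which after taking logarithms is small provided $2m\delta^2$ dominates $k\log(3/\delta) + k\log(en/k)$. With $\delta$ a fixed fraction of $\epsilon$ and $m = \Theta(k\log\frac{n}{k})$, the exponent becomes $-mc_2$ for some $c_2 > 0$, delivering the claimed confidence $1 - 2e^{-mc_2}$.

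The step I expect to require the most care is neither the concentration nor the net, both of which are clean consequences of H\"older and Hoeffding, but the \emph{centering}: one must show that the common mean $\mathbb{E}Z_1$ equals $1$, i.e.\ that $\mathbb{E}\bigl[\lvert\Phi_1 x^n\rvert/\lVert\Phi_1\rVert_q\bigr] = \lVert x^n\rVert_p$. This is where the specific i.i.d.\ bounded-entry model and the interaction between the numerator and the full-row normalization $\lVert\Phi_i\rVert_q$ must be exploited, since the crude H\"older estimate only locates $\mathbb{E}Z_1$ somewhere in $[0,1]$. Recovering the exact value $1$ uniformly over $x^n\in\mathcal{X}_k^n$ demands a finer analysis of the distribution of the normalized inner product $\lvert\Phi_1 x^n\rvert/\lVert\Phi_1\rVert_q$, and this is the portion of the argument I would expect to be the genuine obstacle.
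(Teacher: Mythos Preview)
Your three-stage outline---pointwise Hoeffding for bounded $Z_i\in[0,1]$, extension to a fixed support via a $\delta$-net and the $1$-Lipschitz property in $\lVert\cdot\rVert_p$, then a union bound over the $\binom{n}{k}$ supports---is exactly the architecture of the paper's proof. Its Lemma~3 is the concentration step, Lemma~4 the single-vector bound, Lemma~5 the $\ell_p$-ball covering, and the appendix assembles them just as you propose; your net cardinality $(3/\delta)^k$ and the paper's $\sqrt{k/2\pi p}\,(c_1/\delta)^k$ play the same role in the final exponent.

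You are also right that the centering is the genuine obstacle, and in fact the paper does not resolve it. In its Lemma~4 the paper shows only $\beta:=\mathbb{E}Z_1\le 1$ via H\"older and then asserts
\[
\mathbb{P}\bigl[(1-\epsilon)\beta\le \tfrac{1}{m}\textstyle\sum_i Z_i\le (1+\epsilon)\beta\bigr]>1-2e^{-m\gamma}
\ \Longrightarrow\
\mathbb{P}\bigl[1-\epsilon\le \tfrac{1}{m}\textstyle\sum_i Z_i\le 1+\epsilon\bigr]>1-2e^{-m\gamma},
\]
``since $\beta\le 1$.'' This implication is fine on the upper side (because $(1+\epsilon)\beta\le 1+\epsilon$) but invalid on the lower side: from $\tfrac{1}{m}\sum_i Z_i\ge(1-\epsilon)\beta$ and $\beta\le 1$ one cannot conclude $\tfrac{1}{m}\sum_i Z_i\ge 1-\epsilon$. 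Moreover $\beta=1$ would force equality in H\"older for $\Phi_1$ almost surely, which fails for generic i.i.d.\ bounded entries and generic $x^n$; already for $p=q=2$, $\mathbb{E}\bigl[\lvert\langle\Phi_1,x^n\rangle\rvert/\lVert\Phi_1\rVert_2\bigr]$ is the expected absolute projection of $x^n$ onto a random direction and is strictly below $\lVert x^n\rVert_2$. So the difficulty you isolate is real, it is shared by the paper's own argument, and without an additional hypothesis pinning $\beta$ to~$1$ the lower half of the modified RIP is not established by either proof.
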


The above theorem is proved in the appendix. We now show that there exist matrices that satisfy both the RIP  and the modified RIP  simultaneously with high probability through the following lemma. Such a result will be useful in proving the main result of the paper. 

\begin{lemma}\label{thm:mrip}
If $m= \Theta(k\log n/k)$, then there exists an $m\times n$ matrix $\Phi$ that satisfies the RIP  (\ref{eqn:rip}) and the modified RIP  (\ref{eqn:mrip}) with high probability.
\end{lemma}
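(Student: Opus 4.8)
The plan is to produce a single random ensemble whose draws satisfy both properties simultaneously, and then to invoke a union bound. Concretely, I would let $\Phi = \frac{1}{\sqrt{m}}\Phi_0$, where the entries of $\Phi_0$ are i.i.d., zero-mean, and supported on a symmetric bounded interval $[-a,a]$ (for instance uniform on $[-a,a]$, or a scaled Rademacher variable). Such entries are bounded, hence sub-Gaussian, and with the normalization $1/\sqrt{m}$ they carry the variance needed to make $\mathbb{E}[\lVert\Phi x^n\rVert_2^2]=\lVert x^n\rVert_2^2$. The whole point of choosing one ensemble that meets the hypotheses of Theorem~\ref{thm:mripproof} while also being a standard RIP ensemble is that the two high-probability events can then be controlled on the same probability space.

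For the standard RIP, boundedness of the entries of $\Phi_0$ places $\Phi$ squarely within the class of sub-Gaussian random matrices for which concentration-based RIP proofs apply \cite{Candes2005,Baraniuk2008}. Thus, for $m = \Theta(k\log\frac{n}{k})$ there is a constant $c_1>0$ such that $\Phi$ satisfies the $(\epsilon,k)$-RIP (\ref{eqn:rip}) with probability at least $1-2e^{-mc_1}$. For the modified RIP, the key observation is that the quantity appearing in (\ref{eqn:mrip}) is invariant under scaling of $\Phi$: replacing $\Phi_i$ by $\alpha\Phi_i$ multiplies both $\lvert\Phi_i x^n\rvert$ and $\lVert\Phi_i\rVert_q$ by $\lvert\alpha\rvert$, leaving the ratio unchanged. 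Hence the factor $1/\sqrt{m}$ is immaterial here, and Theorem~\ref{thm:mripproof} applies verbatim to $\Phi_0$ (whose entries are i.i.d.\ and supported in the bounded interval $[-a,a]$), yielding a constant $c_2>0$ for which the modified $(\epsilon,k)$-RIP (\ref{eqn:mrip}) holds with probability at least $1-2e^{-mc_2}$.

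Finally, I would combine the two bounds: by the union bound the probability that $\Phi$ fails either property is at most $2e^{-mc_1}+2e^{-mc_2}\le 4e^{-mc}$ with $c=\min(c_1,c_2)$. For $m=\Theta(k\log\frac{n}{k})$ this tends to $0$, so with high probability a single draw of $\Phi$ satisfies both (\ref{eqn:rip}) and (\ref{eqn:mrip}), which establishes existence.

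The main obstacle is not the union bound itself but the compatibility step: one must exhibit a distribution on the entries that simultaneously meets the boundedness hypothesis required by Theorem~\ref{thm:mripproof} and the zero-mean, sub-Gaussian, correctly-normalized conditions required by the standard RIP result, and one must verify that the modified RIP is genuinely scale-invariant so that the mismatch in normalization between the two statements (the $1/\sqrt{m}$ scaling needed for (\ref{eqn:rip}) versus the unscaled form natural to Theorem~\ref{thm:mripproof}) causes no difficulty. Once a symmetric bounded ensemble is fixed, both hypotheses are met at once, the scale-invariance makes the normalization irrelevant for (\ref{eqn:mrip}), and the argument closes.
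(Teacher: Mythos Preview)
Your proposal is correct and follows essentially the same approach as the paper: the paper fixes the Rademacher ensemble with entries $\pm 1/\sqrt{m}$ (the special case $a=1$, $\Phi_0$ Rademacher of your construction), cites \cite{Candes2005} for the RIP, applies Theorem~\ref{thm:mripproof} for the modified RIP since the entries lie in $[-1,1]$, and concludes that both hold with high probability. Your explicit remarks on scale-invariance of the modified RIP and the union bound make the argument slightly more careful, but the structure is identical.
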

\begin{proof}
Let the entries of $\Phi$ be i.i.d. and distributed according to a Bernoulli distribution taking values $1/\sqrt{m}$ or $-1/\sqrt{m}$ with equal probability. Then, it follows from \cite{Candes2005} that $\Phi$ satisfies the RIP  (\ref{eqn:rip}). Further, using Theorem \ref{thm:mripproof}, it follows that $\Phi$ also satisfies the modified RIP  (\ref{eqn:mrip}) with high probability since the entries of $\Phi$ are i.i.d. and belong to $[-1,1]$. Therefore, $\Phi$ satisfies both (\ref{eqn:rip}) and (\ref{eqn:mrip}) with high probability. 
\end{proof}

The following theorem, which is the main result of the paper states that the coding architecture of Fig. \ref{fig:coding} achieves the same distortion rate function as the optimal compression scheme for the $k$-sparse source.

\begin{theorem}
The coding architecture in Fig. 1 is distortion rate optimal when $\Phi$ satisfies the RIP  (\ref{eqn:rip}) and the modified RIP  (\ref{eqn:mrip}). Mathematically, 
$\forall \epsilon\in(0,1)$, with high probability,
\begin{equation*}
D_2(R)\leq \frac{1+\epsilon}{1-\epsilon}D_1(R).
\end{equation*}
\end{theorem}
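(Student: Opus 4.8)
The plan is to use $\Delta_2(R)$ as a bridge between $D_2(R)$ and $D_1(R)$, and to exploit the fact that the support‑matching constraint forces the error vector to be $k$‑sparse. The first and most important observation is that any feasible reconstruction $\hat{X}^n$ satisfies $\hat{X}^n(T^c)=X^n(T^c)=0$, so $X^n-\hat{X}^n$ is supported on $T$ and therefore lies in $\mathcal{X}^n_k$. This means the modified RIP (\ref{eqn:mrip}) applies verbatim to the error vector. Noting that $Y_i-\Phi_i\hat{X}^n=\Phi_i(X^n-\hat{X}^n)$, the quantity inside the expectation defining $\Delta_2(R)$ is exactly $\sum_{i=1}^m\frac{1}{m}\frac{\lvert\Phi_i(X^n-\hat{X}^n)\rvert}{\lVert\Phi_i\rVert_q}$, which (\ref{eqn:mrip}) sandwiches pointwise between $(1-\epsilon)\lVert X^n-\hat{X}^n\rVert_p$ and $(1+\epsilon)\lVert X^n-\hat{X}^n\rVert_p$.

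First I would prove $D_2(R)\leq\frac{1}{1-\epsilon}\Delta_2(R)$. Let $\hat{X}^n$ denote the reconstruction produced by the scheme of Fig.~\ref{fig:coding} operating with the codebook that achieves (up to an arbitrarily small slack) the infimum in $\Delta_2(R)$; here the standard RIP (\ref{eqn:rip}) is what guarantees that $\Phi$ is invertible on the relevant sparse vectors, so that $\hat{X}^n$ is recovered unambiguously from $\hat{Y}^m=\Phi\hat{X}^n$ and the scheme is well defined. Applying the lower bound in (\ref{eqn:mrip}) to $X^n-\hat{X}^n$, taking expectations, and dividing by $n$ yields
\[
D_2(R) = \frac{1}{n}\mathbb{E}\lVert X^n-\hat{X}^n\rVert_p \leq \frac{1}{1-\epsilon}\cdot\frac{1}{n}\sum_{i=1}^m\frac{1}{m}\mathbb{E}\left[\frac{\lvert Y_i-\Phi_i\hat{X}^n\rvert}{\lVert\Phi_i\rVert_q}\right] = \frac{1}{1-\epsilon}\Delta_2(R).
\]

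Next I would prove $\Delta_2(R)\leq(1+\epsilon)D_1(R)$. The key is that the constraints defining the two infima are identical, so any codebook that is (nearly) optimal for $D_1(R)$ is a feasible — though not necessarily optimal — choice in the infimum defining $\Delta_2(R)$. Substituting this codebook together with its $D_1$‑optimal reconstruction $\hat{X}^n$ can only increase the $\Delta_2$ objective, and the upper bound in (\ref{eqn:mrip}) gives $\sum_{i=1}^m\frac{1}{m}\frac{\lvert Y_i-\Phi_i\hat{X}^n\rvert}{\lVert\Phi_i\rVert_q}\leq(1+\epsilon)\lVert X^n-\hat{X}^n\rVert_p$ pointwise. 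Taking expectations and dividing by $n$ then gives $\Delta_2(R)\leq(1+\epsilon)D_1(R)$. Chaining the two bounds yields $D_2(R)\leq\frac{1+\epsilon}{1-\epsilon}D_1(R)$, and the ``with high probability'' qualifier is inherited directly from Lemma~\ref{thm:mrip}, which supplies a single $\Phi$ satisfying both (\ref{eqn:rip}) and (\ref{eqn:mrip}).

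The step I expect to require the most care is the bookkeeping around the two infima and the encoders' decision rules. In $D_1(R)$ the encoder selects the codeword minimizing the $p$‑norm distortion in the $X$‑domain, whereas in $\Delta_2(R)$ it minimizes a weighted $\ell_1$‑type distortion in the $Y$‑domain; the two encoders need not pick the same codeword for a given source realization. The modified RIP is precisely what controls this mismatch, since it holds uniformly over all $k$‑sparse error vectors and therefore bounds the $Y$‑domain and $X$‑domain distortions against one another regardless of which codeword is selected. Keeping the directions of the two infimum comparisons consistent — using the optimality of one scheme's codebook as a feasible point for the other — is the only genuinely delicate part of the argument.
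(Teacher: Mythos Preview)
Your proposal is correct and follows essentially the same route as the paper: it bridges $D_2(R)$ and $D_1(R)$ through $\Delta_2(R)$, uses the lower half of the modified RIP to bound $(1-\epsilon)D_2(R)\leq\Delta_2(R)$, and then plugs the $D_1$-optimal codebook as a feasible point into $\Delta_2(R)$ together with the upper half of the modified RIP to get $\Delta_2(R)\leq(1+\epsilon)D_1(R)$, with the RIP ensuring exact recovery of $\hat{X}^n$ from $\hat{Y}^m$. Your explicit observation that the support constraint forces $X^n-\hat{X}^n\in\mathcal{X}^n_k$ (so that (\ref{eqn:mrip}) indeed applies to the error vector) is a detail the paper leaves implicit but is exactly the right justification.
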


\begin{proof}
By Lemma \ref{thm:mrip}, there exists a $\Phi$ that satisfies the RIP  (\ref{eqn:rip}) and the modified RIP  (\ref{eqn:mrip}) simultaneously. A candidate code for the quantization of $Y^m$ can be described as follows. The optimal codebook for $X^n$ is multiplied by $\Phi$ to obtain a codebook for $Y^m$. Now, given a $Y^m$, the quantizer looks for that $\Phi\hat{X}^n$ that minimizes the average distortion. Since $\Phi$ satisfies the modified RIP  with high probability, we have
\begin{align}
(1-\epsilon)\lVert X^n-\hat{X}^n\rVert_p&\leq\sum_{i=1}^m\frac{1}{m}\frac{\lvert Y_i-\Phi_i \hat{X}^n\rvert}{\lVert\Phi_i\rVert_q}\nonumber\\
\Rightarrow (1-\epsilon)\mathbb{E}\left[\lVert X^n-\hat{X}^n\rVert_p\right] &\leq \sum_{i=1}^m\mathbb{E}\left[\frac{1}{m}\frac{\lvert Y_i-\Phi_i \hat{X}^n\rvert}{\lVert\Phi_i\rVert_q}\right]\nonumber\\
\Rightarrow (1-\epsilon)D_2(R)&\leq \Delta_2(R)\label{eqn:step1}
\end{align}

Now, $\forall \delta>0$, let $\tilde{X}^n$ be the quantized value of $X^n$ according to the optimal quantizer that achieves $D_1(R)+\delta$. $\Phi\tilde{X}^n$ is a feasible solution to the problem $\Delta_2(R)$. Therefore, again by the modified RIP, with high probability, 
\begin{equation}\label{eqn:step2}
\Delta_2(R)\leq \frac{1}{n}\sum_{i=1}^m\frac{1}{m}\mathbb{E}\left[\frac{\lvert Y_i-\Phi_i \tilde{X}^n\rvert}{\lVert\Phi_i\rVert_q}\right]\leq \frac{1}{n}(1+\epsilon)\lVert X^n-\tilde{X}^n\rVert_p\leq(1+\epsilon)(D_1(R)+\delta).
\end{equation}
Now, since $\Delta_2(R)\leq (1+\epsilon)(D_1(R)+\delta)$ for all $\delta>0$, we get 
\begin{equation*}
\Delta_2(R)\leq (1+\epsilon)D_1(R).
\end{equation*}
From Equation (\ref{eqn:step1}) and (\ref{eqn:step2}), we get 
\begin{equation*}
D_2(R)\leq \frac{1+\epsilon}{1-\epsilon}D_1(R).
\end{equation*}
We claim that the scheme in Fig. 1 achieves the optimal distortion rate function since $D_2(R)\leq\frac{1+\epsilon}{1-\epsilon}D_1(R)$  for all $\epsilon\in(0,1)$. However, note that as $\epsilon\rightarrow 0$, we require more and more number of measurements to satisfy the RIP and modified RIP  with high probability. Also, forcing $\hat{Y}^m=\Phi \hat{X}^n$ where $\hat{X}^n$ is sparse implies that recovery of the sparse vector is possible without any loss. In other words, $\hat{X}^n$ may be exactly recovered from $\hat{Y}^m$ since $\Phi$ satisfies the RIP  as well. 
\end{proof}

For the specific case of 2-norm distortion measures, the above theorem can be proved for matrices $\Phi$ that just satisfy the RIP alone.  Define 
\begin{align*}
D_1^2(R) = &\inf_{\hat{\mathcal{X}}^n_k}\frac{1}{n}\mathbb{E}\left[\lVert X^n-\hat{X}^n\rVert_2\right] \\
\textrm{subject to }& \lvert\hat{\mathcal{X}}^n_k\rvert \leq 2^{nR} \textrm{ and } X^n(T^c)=\hat{X}^n(T^c)
\end{align*}
and
\begin{align*}
\Delta_2^2(R) = &\inf_{\hat{\mathcal{X}}^n_k}\frac{1}{n}\mathbb{E}\left[\lVert Y^n-\Phi\hat{X}^n\rVert_2\right]\\
\textrm{subject to }& \lvert\hat{\mathcal{X}}_k^n\rvert \leq 2^{nR} \textrm{ and } X^n(T^c)=\hat{X}^n(T^c).
\end{align*}
Let $D_2^2(R)$ be the distortion rate function achieved by the coding architecture of Fig. 1. 

\begin{remark}
The coding architecture of Fig. 1 is distortion rate optimal when $\Phi$ satisfies the RIP  (\ref{eqn:rip}). Mathematically, $\forall \epsilon\in(0,1)$, with high probability,
\begin{equation*}
D_2^2(R)\leq \frac{1+\epsilon}{1-\epsilon}D_1^2(R).
\end{equation*}
\end{remark}
The proof is similar to that of Theorem 2, where only the RIP is used instead of the modified RIP in steps (\ref{eqn:step1}) and (\ref{eqn:step2}).  



\section{Conclusion}\label{sec:conc}
We consider the problem of quantization of sparse signals using compressive sensing. We show that the chain comprising of compressive sampling followed by quantization and then reconstruction is rate distortion optimal when the reconstruction is also required to be sparse. The result is shown when the distortion metric is any $p$-norm, $p\geq 1$, on the error between the source and the reconstruction. The proof of the result requires the compressive sensing matrix to satisfy a new modified restricted isometry property and we also prove the existence of matrices satisfying this property. 


\appendix
Theorem \ref{thm:mrip} is proved through the sequence of the following lemmas. The overall procedure closely follows the technique in \cite{Baraniuk2008} with suitable changes as required. 
We first state a lemma about the concentration of measure around the mean for bounded random variables. Let $Z=|Y_1|+|Y_2|+\ldots+|Y_m|$, where $Y_i$, $i=1,2,\ldots,m$, are independent bounded continuous random variables such that $|Y_i|\leq C$ for each $i$. Also, let $\mathbb{E}[Z]=\beta m$. 
\begin{lemma}\label{thm:concentration}
For $\epsilon\in(0,1)$, the random variable $Z$ satisfies
\begin{equation*}
\mathbb{P}\left[(1-\epsilon)\beta m\leq Z\leq (1+\epsilon)\beta m\right] > 1-2e^{-m\gamma(\epsilon)},
\end{equation*}
with $\gamma(\epsilon)>0$. 
\end{lemma}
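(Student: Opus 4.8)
The plan is to recognize Lemma~\ref{thm:concentration} as a direct consequence of Hoeffding's inequality. The key observation is that $Z=\sum_{i=1}^m |Y_i|$ is a sum of independent random variables $W_i := |Y_i|$, each confined to the interval $[0,C]$, since $0 \le |Y_i| \le C$. Hoeffding's inequality is precisely the tool for controlling the deviation of such a bounded, independent sum from its mean, so essentially no cleverness beyond setting up the application is required.

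First I would rewrite the target two-sided event in centered form. Since $\mathbb{E}[Z]=\beta m$, the event $\{(1-\epsilon)\beta m \le Z \le (1+\epsilon)\beta m\}$ is identical to $\{|Z - \mathbb{E}[Z]| \le \epsilon\beta m\}$, and its complement is $\{|Z - \mathbb{E}[Z]| > \epsilon\beta m\}$. Thus it suffices to bound the probability of a deviation exceeding $t := \epsilon\beta m$. Applying Hoeffding's inequality to $W_1,\ldots,W_m$ with ranges $b_i - a_i = C$ gives
\begin{equation*}
\mathbb{P}\left[|Z - \mathbb{E}[Z]| > \epsilon\beta m\right] \le 2\exp\left(-\frac{2(\epsilon\beta m)^2}{\sum_{i=1}^m C^2}\right) = 2\exp\left(-\frac{2\epsilon^2\beta^2}{C^2}\,m\right).
\end{equation*}
Taking complements yields the claim with the explicit constant $\gamma(\epsilon) = 2\epsilon^2\beta^2/C^2$.

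The last step is to verify $\gamma(\epsilon)>0$, which holds because $\epsilon \in (0,1)$, $C<\infty$, and $\beta>0$ (the per-coordinate mean $\beta=\mathbb{E}[|Y_i|]$ is strictly positive unless every $Y_i$ vanishes almost surely, a degenerate case excluded in the intended application). I do not expect a real obstacle here: the content is a textbook application of Hoeffding's inequality, and the only points requiring a moment's care are writing the range of each summand as $C$ rather than $2C$ (using $|Y_i|\in[0,C]$ directly rather than $Y_i\in[-C,C]$) and confirming positivity of the exponent. The genuine difficulty of the paper lies downstream, in promoting this pointwise concentration into the uniform bound over all of $\mathcal{X}_k^n$ demanded by Theorem~\ref{thm:mripproof}, which will require a covering/net argument on the sparse sphere together with a union bound over the $\binom{n}{k}$ possible supports.
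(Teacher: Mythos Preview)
Your proof is correct: applying the two-sided Hoeffding inequality directly to the independent summands $W_i=|Y_i|\in[0,C]$ yields the bound with the explicit constant $\gamma(\epsilon)=2\epsilon^2\beta^2/C^2$. One minor imprecision: the $Y_i$ are only assumed independent, not i.i.d., so $\beta$ is the average $\tfrac{1}{m}\sum_i\mathbb{E}[|Y_i|]$ rather than a common per-coordinate mean; this does not affect the argument since Hoeffding's inequality requires only independence and boundedness.

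The paper takes a somewhat different route. It treats the two tails separately via Chernoff bounds: for the lower tail it bounds $\mathbb{E}[e^{-\lambda|Y_i|}]$ using the elementary inequality $e^{-a}\le 1-a+a^2/2$ together with $1-a\le e^{-a}$, and for the upper tail it invokes Hoeffding's lemma on the moment generating function; it then chooses a common $\lambda$ small enough that both exponents are positive, arriving at $\gamma(\epsilon)=\lambda\min\{\epsilon\beta-\lambda m^{-1}\sum_i\mathbb{E}[|Y_i|^2],\,(1+\epsilon)\beta-\lambda C^2/8\}$. Your route is shorter and delivers an explicit $\gamma(\epsilon)$ depending only on $\epsilon$, $\beta$, and $C$; the paper's argument is more hands-on and its resulting constant additionally involves the second moments and an unspecified $\lambda$. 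Either version suffices for the downstream union-bound and covering argument, where only the qualitative fact $\gamma(\epsilon)>0$ (and its scaling with $m$) is used.
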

\begin{proof}
Following the procedure in \cite{Baraniuk2008}, we use the inequality $e^{-a}\leq 1-a+\frac{a^2}{2}$ for all $a\geq 0$ to get
\begin{equation*}
\mathbb{E}\left[e^{-\lambda |Y_i|}\right]\leq 1-\lambda\mathbb{E}[|Y_i|]+\lambda^2\mathbb{E}[|Y_i|^{2}],
\end{equation*}
for $\lambda>0$. Further, since $1-a\leq e^{-a}$ for all $a\in\mathbb{R}$, we have 
\begin{equation*}
\mathbb{E}\left[e^{-\lambda |Y_i|}\right]\leq  e^{-(\lambda\mathbb{E}[|Y_i|]-\lambda^2\mathbb{E}[|Y_i|^{2}])}.
\end{equation*}
Therefore, by Markov's inequality, we obtain,
\begin{align*}
\mathbb{P}[-Z\geq-(1-\epsilon)\beta m]&=\mathbb{P}\left[e^{-\lambda Z}\geq e^{-\lambda(1-\epsilon)\beta m}\right]\\
&\leq e^{\lambda(1-\epsilon)\beta m}\mathbb{E}\left[e^{-\lambda Z}\right]\\
&=e^{\lambda(1-\epsilon)\beta m}\mathbb{E}\left[e^{-\sum_{i=1}^m\lambda|Y_i|}\right]\\
&\leq e^{-\lambda\left[\sum_{i=1}^m\mathbb{E}[|Y_i|]-m(1-\epsilon)\beta -\lambda\sum_{i=1}^m\mathbb{E}[|Y_i|^{2}]\right]}\\
&\leq e^{-m\lambda\left[\epsilon\beta -\lambda\sum_{i=1}^m\mathbb{E}[|Y_i|^{2}]/m\right]}. 
\end{align*}

For the other side of the inequality, we use Hoeffding's inequality as follows. Now,
\begin{align*}
\mathbb{P}[Z\geq (1+\epsilon)\beta m] &= \mathbb{P}\left[e^{\lambda Z}\geq e^{\lambda(1+\epsilon)\beta m}\right]\\
&\leq e^{-\lambda(1+\epsilon)\beta m}\mathbb{E}\left[e^{\lambda Z}\right]\\
&\leq e^{-\lambda(1+\epsilon)\beta m}\left[e^{\lambda^2C^2/8}\right]^m\\
&= e^{-m\lambda\left[(1+\epsilon)\beta-\lambda C^2/8\right]},
\end{align*}
where we use Markov's inequality in the second step and Hoeffding's inequality in the third step. Choosing $\lambda<\min\left\{\frac{m\epsilon\beta}{\sum_{i=1}^m\mathbb{E}[|Y_i|^{2}]},\frac{(1+\epsilon)8\beta}{C^2}\right\}$, we get 
\begin{equation*}
\mathbb{P}\left[(1-\epsilon)\beta m\leq Z\leq (1-\epsilon)\beta m\right] > 1-2e^{-m\gamma(\epsilon)},
\end{equation*}
with $\gamma(\epsilon)=\lambda\min\Big\{\epsilon\beta-\lambda\frac{\sum_{i=1}^m\mathbb{E}[|Y_i|^2]}{m},(1+\epsilon)\beta-\lambda \frac{C^2}{8}\Big\}>0$. 
\end{proof}

The following lemma states that for every given $x^n$, there exists a matrix that satisfies the condition in the modified isometry property with high probability. Note that this does not prove the statement of the theorem yet since we need to show the existence of a matrix that satisfies the condition for all $x^n\in\mathcal{X}_k^n$.

\begin{lemma}\label{thm:probgivenx}
For every given $x^n$, and $\epsilon\in(0,1)$, an $m\times n$ matrix $\Phi$ with i.i.d. entries supported in $[C_1,C_2]$, where $-\infty<C_1<C_2<\infty$, satisfies
\begin{equation*}
\mathbb{P}\left[ \left\lvert\sum_{i=1}^m\frac{1}{m} \frac{\lvert \Phi_i x^n\rvert}{\lVert x^n\rVert_p\lVert\Phi_i\rVert_q}-1\right\rvert > \epsilon\right] < 2e^{-m\gamma(\epsilon)},
\end{equation*}
where $\gamma(\epsilon)>0$ and $\frac{1}{p}+\frac{1}{q}=1$, $p,q\geq 1$. 
\end{lemma}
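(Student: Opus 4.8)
The plan is to read the quantity in the statement as the empirical average of $m$ independent, uniformly bounded summands and then invoke the scalar concentration bound of Lemma \ref{thm:concentration}. Fixing $x^n$, I would set
\[
Y_i := \frac{\lvert \Phi_i x^n\rvert}{\lVert x^n\rVert_p\,\lVert \Phi_i\rVert_q}, \qquad i = 1,\ldots,m,
\]
so that the random variable inside the probability is precisely $\frac{1}{m}\sum_{i=1}^m Y_i = Z/m$, where $Z := \sum_{i=1}^m \lvert Y_i\rvert$ since each $Y_i \ge 0$. Because the entries of $\Phi$ are i.i.d., its rows $\Phi_1,\ldots,\Phi_m$ are i.i.d., and hence so are $Y_1,\ldots,Y_m$; they also inherit continuity from the entries. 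This supplies the independence hypothesis of Lemma \ref{thm:concentration}.

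The one genuinely structural step is the boundedness. Since $\frac{1}{p}+\frac{1}{q} = 1$, H\"older's inequality gives $\lvert \Phi_i x^n\rvert \le \lVert \Phi_i\rVert_q\,\lVert x^n\rVert_p$, so that $0 \le Y_i \le 1$ and thus $\lvert Y_i\rvert \le 1 =: C$ for every $i$, every $x^n$, and every realization of $\Phi$. I would emphasize that this bound is uniform in $x^n$ and in the law of the entries, which is exactly what will later permit a clean union bound over a net when upgrading to the supremum statement of Theorem \ref{thm:mripproof}.

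With independence and $\lvert Y_i\rvert \le 1$ established, I would put $\beta := \mathbb{E}[Y_1]$, so that $\mathbb{E}[Z] = \beta m$, and apply Lemma \ref{thm:concentration} verbatim: for $\epsilon \in (0,1)$ there is $\gamma(\epsilon) > 0$ with
\[
\mathbb{P}\left[(1-\epsilon)\beta m \le Z \le (1+\epsilon)\beta m\right] > 1 - 2e^{-m\gamma(\epsilon)}.
\]
Dividing through by $m$ turns this into the stated two-sided deviation bound for $\frac{1}{m}\sum_i Y_i$ about its mean $\beta$.

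The main obstacle is the centering: Lemma \ref{thm:concentration} concentrates $Z/m$ about $\beta = \mathbb{E}[Y_1]$, whereas the statement centers at $1$, so I would need to identify $\beta$ with $1$. I would first note that $Y_i$ is invariant under $x^n \mapsto c x^n$, so $\beta$ depends only on the direction of $x^n$ and not its scale, reducing the task to verifying $\mathbb{E}\!\left[\lvert \Phi_1 x^n\rvert / \lVert \Phi_1\rVert_q\right] = \lVert x^n\rVert_p$ under the chosen normalization of the entries. I expect this mean computation to be the delicate point; I would also observe that even if $\beta$ were a fixed constant other than $1$, it enters the modified RIP symmetrically as $(1\pm\epsilon)\beta$ and cancels in the ratio $\frac{1+\epsilon}{1-\epsilon}D_1(R)$ of the main theorem, so the downstream conclusions are insensitive to its precise value.
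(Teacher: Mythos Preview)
Your setup, the H\"older bound $\lvert Y_i\rvert\le 1$, and the appeal to Lemma~\ref{thm:concentration} are exactly the paper's route; the divergence is only in your last paragraph, where you flag the centering obstacle. The paper does \emph{not} attempt to prove $\beta=1$. It simply records $\beta\le 1$ (immediate from the same H\"older inequality taken in expectation) and then passes from the event $\{(1-\epsilon)\beta\le Z/m\le(1+\epsilon)\beta\}$ to $\{(1-\epsilon)\le Z/m\le(1+\epsilon)\}$ ``since $\beta\le 1$.'' Your instinct that this is delicate is correct on two counts. First, your proposed resolution of \emph{computing} $\beta=1$ would fail: H\"older is strict for generic $x^n$ and generic bounded-entry rows, so typically $\beta<1$. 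Second, the paper's own inclusion is clean on the upper tail (because $(1+\epsilon)\beta\le 1+\epsilon$) but not on the lower tail (from $Z/m\ge(1-\epsilon)\beta$ and $\beta\le 1$ one cannot conclude $Z/m\ge 1-\epsilon$), so the lower bound in the lemma as stated is looser than the paper's one-line derivation suggests. Your fallback observation---that a fixed constant $\beta$ would propagate symmetrically through the modified RIP and cancel in the ratio $(1+\epsilon)/(1-\epsilon)$ of Theorem~2---is therefore the sounder way to close the argument; note also that in the proof of Theorem~\ref{thm:mripproof} the paper obtains the upper bound directly from H\"older without any concentration, so only the lower-tail part of this lemma is actually load-bearing downstream.
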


\begin{proof}
Since each entry of $\Phi$ is i.i.d. and bounded in $[C_1,C_2]$, by H\"{o}lder's inequality, we have 
\begin{equation*}
\frac{|(\Phi_i x^n)|}{\lVert x^n\rVert_p\lVert \Phi_i\rVert_q}\leq \frac{\lVert x^n\rVert_p\lVert \Phi_i\rVert_q}{\lVert x^n\rVert_p\lVert \Phi_i\rVert_q}\leq 1,
\end{equation*}
where $q$ satisfies $\frac{1}{p}+\frac{1}{q}=1$. Therefore, the random variable $Y_i=\frac{|(\Phi_i x^n)|}{\lVert x^n\rVert_p\lVert \Phi_i\rVert_q}$ satisfies $|Y_i|\leq 1$,
for $i=1,2,\ldots,m$. Let $\sum_{i=1}^m\mathbb{E}[|Y_i|]=m\beta$. It follows that $\beta\leq 1$, since 
\begin{equation*}
\mathbb{E}\left[\frac{|(\Phi_i x^n)|}{\lVert x^n\rVert_p\lVert \Phi_i\rVert_q}\right] \leq \mathbb{E}\left[\frac{\lVert x^n\rVert_p\lVert \Phi_i\rVert_q}{\lVert x^n\rVert_p\lVert \Phi_i\rVert_q}\right]= 1.
\end{equation*}
By applying Lemma \ref{thm:concentration}, we get
\begin{align*}
\mathbb{P}\left[(1-\epsilon)\beta \leq  \sum_{i=1}^m\frac{1}{m}|Y_i|\leq (1+\epsilon)\beta \right] &> 1-2e^{-m\gamma(\epsilon)}\\
\Rightarrow \mathbb{P}\left[(1-\epsilon)\leq \sum_{i=1}^m\frac{1}{m}|Y_i|\leq (1+\epsilon)\right] &> 1-2e^{-m\gamma(\epsilon)},
\end{align*} 
since $\beta\leq 1$. The desired result follows from the above.
\end{proof}

We now present a lemma about the quantization of vectors in the unit $p$-norm ball. We characterize the size of the set required to represent every such vector within a prescribed $p$-norm error. 

\begin{lemma}\label{thm:quant}
For $\epsilon\in(0,1)$, there exists a finite set $\mathcal{Q}\subset\mathbb{R}^n$ such that $|\mathcal{Q}|\leq \sqrt{\frac{n}{2\pi p}}\left(\frac{c_1}{\epsilon}\right)^n$ and 
\begin{equation*}
\sup_{x^n:\lVert x^n\rVert_p\leq 1}\min_{v^n\in\mathcal{Q}}\lVert x^n-v^n\rVert_p \leq \epsilon.
\end{equation*}
\end{lemma}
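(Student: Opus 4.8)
The plan is to recognize the statement as a covering-number bound for the unit $p$-norm ball $B_p=\{x^n\in\mathbb{R}^n:\lVert x^n\rVert_p\le 1\}$ and to prove it by an explicit grid (lattice) construction, which simultaneously supplies the $\epsilon$-net property and a clean volumetric count. I choose the grid route rather than a maximal $\epsilon$-separated packing because the latter cancels the ball volume in the ratio and so cannot produce the $\Gamma$-type prefactor; the grid count, by contrast, retains the explicit $\ell_p$-ball volume, which is precisely where the factor $\sqrt{n/(2\pi p)}$ originates. First I would fix a cubic lattice $\delta\mathbb{Z}^n$ and note that every point of a cube of side $\delta$ lies within $\ell_p$-distance $(\delta/2)\,n^{1/p}$ of the cube's centre, the extreme case being a corner. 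Choosing $\delta=2\epsilon/n^{1/p}$ makes this covering radius exactly $\epsilon$.

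Next I would take $\mathcal{Q}$ to be the set of centres of those lattice cubes that intersect $B_p$. By the covering-radius computation, every $x^n$ with $\lVert x^n\rVert_p\le 1$ lies in one such cube and is therefore within $\epsilon$ of its centre, which yields the required $\sup$–$\min$ inequality. To count $|\mathcal{Q}|$ I would use that the relevant cubes have disjoint interiors and, since each has $\ell_p$-diameter $\delta\,n^{1/p}=2\epsilon$ and meets $B_p$, all lie inside the enlarged ball $B_p(1+2\epsilon)$. Comparing volumes then gives
\[
|\mathcal{Q}|\,\delta^n \le \mathrm{Vol}\big(B_p(1+2\epsilon)\big)=(1+2\epsilon)^n\,\mathrm{Vol}(B_p),
\]
so that $|\mathcal{Q}|\le (1+2\epsilon)^n\,\mathrm{Vol}(B_p)\,\big(n^{1/p}/(2\epsilon)\big)^n$.

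Finally I would substitute the closed form $\mathrm{Vol}(B_p)=\big(2\Gamma(1+\tfrac1p)\big)^n/\Gamma(1+\tfrac np)$ and apply Stirling's formula to $\Gamma(1+\tfrac np)$. The combination $n^{n/p}/\Gamma(1+\tfrac np)$ then contributes $(pe)^{n/p}$ up to a $\sqrt{n/p}$-order polynomial term that matches the claimed prefactor $\sqrt{n/(2\pi p)}$, while the remaining $n$-th powers collect into a single base $(1+2\epsilon)\,\Gamma(1+\tfrac1p)\,(pe)^{1/p}$. For $p\ge 1$ and $\epsilon\in(0,1)$ this base is bounded by an absolute constant, since $1+2\epsilon<3$, $\Gamma(1+\tfrac1p)\le 1$, and $(pe)^{1/p}\le e$, so it can be absorbed into $c_1$.

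The geometry here is routine; the main obstacle is the asymptotic bookkeeping in this last step — pinning down the polynomial prefactor at the claimed order from Stirling, being careful about which of the two-sided bounds $\sqrt{2\pi x}(x/e)^x\le\Gamma(1+x)\le\sqrt{2\pi x}(x/e)^x e^{1/(12x)}$ is used and on which side the $\sqrt{n/p}$ factor lands, and then verifying that the collected base constant is genuinely uniform in $n$, $p$ and $\epsilon$ so that it fits inside a single $c_1$. A secondary care point is handling small $n$ (or edge cases near the boundary of the enlarged ball), where one should invoke the explicit Stirling bounds rather than the asymptotic equivalence.
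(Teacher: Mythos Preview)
Your proposal is correct and follows essentially the same route as the paper: a cubic grid with spacing of order $\epsilon/n^{1/p}$, a volumetric count using the closed-form volume of the $\ell_p$ ball, and Stirling's bound on $\Gamma(1+n/p)$ to extract the $\sqrt{n/p}$-type prefactor and the base constant $c_1$. The only noteworthy difference is in the counting step: the paper simply asserts that the number of grid points in $\mathcal{B}_p(1)$ is at most the grid count times the volume ratio $\mathrm{Vol}(\mathcal{B}_p(1))/2^n$, whereas your disjoint-cubes-inside-an-inflated-ball argument makes this rigorous at the cost of an extra harmless factor $(1+2\epsilon)^n$; conversely, the paper's choice of rounding each coordinate toward zero guarantees the net point stays inside $\mathcal{B}_p(1)$ without needing to enlarge the ball.
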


\begin{proof}
For $k\in\mathbb{N}$, a natural number, define
\begin{equation*}
\mathcal{Q}^{'} = \{ x^n: x_i=\frac{j}{k} \textrm{ for some } j\in\{-k,-k+1,\ldots,k\}\}.
\end{equation*}

$\mathcal{Q}'$ is a set of quantization indices in $n$ dimensions with size $(2k+1)^n$. We now define $\mathcal{Q}=\mathcal{Q}'\cap \mathcal{B}_p(1)$, where $\mathcal{B}_p(1)$ is the unit ball in $\mathbb{R}^n$ with $L_p$ norm as the distance metric. The size of $\mathcal{Q}$, is then the ratio of the volumes of the unit ball $\mathcal{B}_p(1)$ to the unit cube times the size of $\mathcal{Q}'$. The volume of $\mathcal{B}_p(1)$ is given by
\begin{equation*}
\vol(\mathcal{B}_p(1)) = \frac{\Gamma\left(\frac{p+1}{p}\right)^n}{\Gamma\left(\frac{p+n}{p}\right)}.
\end{equation*}
Therefore,
\begin{equation*}
|\mathcal{Q}| \leq (2k+1)^n\frac{\Gamma\left(\frac{p+1}{p}\right)^n}{\Gamma\left(\frac{p+n}{p}\right)}.
\end{equation*}

We now specify the choice of $k$ that bounds $|\mathcal{Q}|$ as required. Let $v^n$ be the quantization index for $x^n$. Mathematically, for $i=1,2,\ldots,n$, we choose, $v_i=\sign(x_i) \frac{\lfloor|x_i|k\rfloor}{k}$. Therefore, $|x_i-v_i|\leq 1/k$ and 
\begin{equation*}
\lVert x^n-v^n\rVert_p \leq \frac{n^{1/p}}{k}. 
\end{equation*}
Choosing $k=\lceil n^{1/p}\rceil/\epsilon$, we satisfy $\lVert x^n-v^n\rVert_p \leq \epsilon$, and obtain 
\begin{equation*}
|\mathcal{Q}|\leq (3\lceil n^{1/p}\rceil/\epsilon)^n\frac{\Gamma\left(\frac{p+1}{p}\right)^n}{\Gamma\left(\frac{p+n}{p}\right)}.
\end{equation*}
Now,
\begin{equation*}
\Gamma\left(1+n/p\right) \geq \sqrt{\frac{2\pi p}{n}} \left(\frac{n}{ep}\right)^{n/p},
\end{equation*}

and $\Gamma\left(\frac{p+1}{p}\right)\leq 1$. Therefore 
\begin{equation*}
|\mathcal{Q}|\leq \frac{6^n n^{n/p}}{\epsilon^n} \sqrt{\frac{n}{2\pi p}}\left(\frac{ep}{n}\right)^{n/p}<\sqrt{\frac{n}{2\pi p}}\left(\frac{c_1}{\epsilon}\right)^n,
\end{equation*}
where $c_1=6(ep)^{1/p}$.
\end{proof}

We now prove Theorem 1. $\Phi$ is a matrix of dimension $m\times n$ containing entries chosen i.i.d. and supported in $[C_1,C_2]$, where $-\infty<C_1<C_2<\infty$. We need to show that for every $\epsilon\in(0,1)$, if $m=\Theta(k\log\frac{n}{k})$, there exists a constant $c_2>0$  such that with probability greater than $1-2e^{-mc_2}$, we have 
\begin{equation*}
\sup_{x^n\in \mathcal{X}_k^n}\left\lvert\sum_{i=1}^m\frac{1}{m} \frac{\lvert \Phi_i x^n\rvert}{\lVert x^n\rVert_p\lVert\Phi_i\rVert_q}-1\right\rvert < \epsilon. 
\end{equation*}

Without loss of generality, we consider $x^n$ such that $\lVert x^n\rVert_p=1$. We first prove that for the set of $k$-sparse vectors $x^n$ with a given sparsity pattern, $\Phi$ exists with probability greater than $1-2(4c_1/\epsilon)^ke^{-m\gamma}$. 

By Lemma \ref{thm:quant}, there exists a set $\mathcal{Q}$ with size $|\mathcal{Q}|\leq \sqrt{\frac{k}{2\pi p}}(8c_1/\epsilon)^k$, such that 
\begin{equation*}
\sup_{x^n:\lVert x^n\rVert=1}\min_{v^n\in\mathcal{Q}}\lVert x^n-v^n\rVert \leq \epsilon/8.
\end{equation*}

Now, we show that there exists a matrix $\Phi$ such that
\begin{equation}\label{eqn:qrip}
\sup_{v^n\in\mathcal{Q}}\left\lvert\sum_{i=1}^m\frac{1}{m} \frac{\lvert \Phi_i v^n\rvert}{\lVert v^n\rVert_p\lVert\Phi_i\rVert_q}-1\right\rvert < \epsilon/2 
\end{equation}
with high probability. By Lemma \ref{thm:probgivenx} and union bounding argument, we have, 
\begin{equation*}
\mathbb{P}\left[\sup_{v^n\in\mathcal{Q}}\left\lvert\sum_{i=1}^m\frac{1}{m} \frac{\lvert \Phi_i v^n\rvert}{\lVert v^n\rVert_p\lVert\Phi_i\rVert_q}-1\right\rvert>\epsilon/2\right] \leq |\mathcal{Q}|2e^{-m\gamma(\epsilon/2)} \leq 2\sqrt{\frac{k}{2\pi p}}(8c_1/\epsilon)^ke^{-m\gamma(\epsilon/2)}.
\end{equation*}

We now prove the statement of the theorem. By H\"{o}lder's inequality,
\begin{equation*}
\sum_{i=1}^m\frac{1}{m}\frac{\lvert\Phi_i x^n\rvert}{\lVert x^n\rVert_p\lVert \Phi_i\rVert_q}\leq 1.
\end{equation*}
Thus, $\sum_{i=1}^m\frac{1}{m}\frac{\lvert\Phi_i x^n\rvert}{\lVert x^n\rVert_p\lVert \Phi_i\rVert_q}\leq 1+\epsilon$. Now, for the other side, using (\ref{eqn:qrip}),
\begin{equation*}
\sum_{i=1}^m\frac{1}{m}\frac{\lvert\Phi_i x^n\rvert}{\lVert x^n\rVert_p\lVert \Phi_i\rVert_q}\geq \sum_{i=1}^m\frac{1}{m}\frac{\lvert\Phi_i v^n\rvert-\lvert\Phi_i (x^n-v^n)\rvert}{\lVert x^n\rVert_p\lVert \Phi_i\rVert_q}\geq (1-\epsilon/2)(1-\epsilon/8)-\epsilon/8\geq 1-\epsilon.
\end{equation*}

Now, considering all $\binom{n}{k}\leq (en/k)^k$, $k$ sparse vectors, the probability that there does not exist $\Phi$ satisfying the modified RIP  is upper bounded by 
\begin{equation*}
2(en/k)^k\sqrt{\frac{k}{2\pi p}}(8c_1/\epsilon)^ke^{-m\gamma(\epsilon/2)} \leq e^{k[\log(en/k)+\log(8c_1/\epsilon)]+\frac{1}{2}\log\frac{k}{2\pi p}-m\gamma(\epsilon/2)+\log2}.
\end{equation*}
Therefore, if 
\begin{equation*}
m > \frac{1}{\gamma(\epsilon/2)}\left(k[\log(en/k)+\log(8c_1/\epsilon)]+\frac{1}{2}\log\frac{k}{2\pi p}+\log2\right),
\end{equation*}
there exists $c_2>0$, chosen smaller than $\gamma(\epsilon/2)-\frac{1}{m}\left(k[\log(en/k)+\log(8c_1/\epsilon)]+\frac{1}{2}\log\frac{k}{2\pi p}+\log2\right)$, such that probability that there does not exist a matrix satisfying the $p$-norm condition is upper bounded by $2e^{-mc_2}$.


\begin{thebibliography}{10}
\providecommand{\url}[1]{#1}
\csname url@samestyle\endcsname
\providecommand{\newblock}{\relax}
\providecommand{\bibinfo}[2]{#2}
\providecommand{\BIBentrySTDinterwordspacing}{\spaceskip=0pt\relax}
\providecommand{\BIBentryALTinterwordstretchfactor}{4}
\providecommand{\BIBentryALTinterwordspacing}{\spaceskip=\fontdimen2\font plus
\BIBentryALTinterwordstretchfactor\fontdimen3\font minus
  \fontdimen4\font\relax}
\providecommand{\BIBforeignlanguage}[2]{{%
\expandafter\ifx\csname l@#1\endcsname\relax
\typeout{** WARNING: IEEEtran.bst: No hyphenation pattern has been}%
\typeout{** loaded for the language `#1'. Using the pattern for}%
\typeout{** the default language instead.}%
\else
\language=\csname l@#1\endcsname
\fi
#2}}
\providecommand{\BIBdecl}{\relax}
\BIBdecl

\bibitem{Duarte2008}
M.~Duarte, M.~Davenport, D.~Takhar, J.~Laska, T.~Sun, K.~Kelly, and R.~G.
  Baraniuk, ``Single-pixel imaging via compressive sampling,'' \emph{IEEE
  Signal Process. Magazine}, vol.~25, pp. 83--91, Mar. 2008.

\bibitem{Vikalo2008}
F.~Parvaresh, H.~Vikalo, S.~Misra, and B.~Hassibi, ``Recovering sparse signals
  using sparse measurement matrices in compressed dna microarrays,'' \emph{IEEE
  Journ. of Sel. Topics in Sig. Process., Spl. Issue on Genomic and Proteomic
  Sig. Process.}, vol.~2, pp. 275--285, Jun. 2008.

\bibitem{Reeves2010}
G.~Reeves and M.~Gastpar, ````{C}ompressed'' compressed sensing,'' in
  \emph{Proc. IEEE Int Symp Info Theory}, Austin, TX 2010.

\bibitem{Baraniuk2009}
J.~N. Laska, P.~Boufounos, M.~A. Davenport, and R.~G. Baraniuk, ``Democracy in
  action: Quantization, saturation, and compressive sensing,'' 2009, submitted
  for publication. Preprint available at
  http://www.ece.rice.edu/~richb/recentPapers.html.

\bibitem{Goyal2009}
Z.~Sun and V.~K. Goyal, ``Optimal quantization of random measurements in
  compressed sensing,'' in \emph{Proc. IEEE Int Symp Info Theory}, Seoul, Korea
  2010.

\bibitem{Dai2010}
W.~Dai, H.~V. Pham, and O.~Milenkovic, ``Quantized compressive sensing,'' 2006,
  submitted for publication. Preprint available at
  http://arxiv.org/abs/0901.0749.

\bibitem{Boufounos2010}
P.~Boufounos, ``Universal rate efficient scalar quantization,'' 2006, submitted
  for publication. Preprint available at http://arxiv.org/abs/1009.3145.

\bibitem{Nazer2010}
B.~Nazer and R.~Nowak, ``Sparse interactions: Identifying high-dimensional
  multilinear systems via compressed sensing,'' in \emph{Proc. of 48th Allerton
  Conference on Communication, Control and Computing}, Monticello, Il 2010.

\bibitem{Vershynin2008}
M.~Rudelson and R.~Vershynin, ``On sparse reconstruction from {F}ourier and
  {G}aussian measurements,'' \emph{Comm. Pure Appl. Math.}, vol.~61, pp.
  1025--1045, 2008.

\bibitem{Baraniuk2008}
R.~G. Baraniuk, M.~Davenport, R.~A. De{V}ore, and M.~B. Wakin, ``A simple proof
  of the restricted isometry property for random matrices,'' \emph{Constructive
  Approximation}, 2008.

\bibitem{Candes2005}
E.~Cand\`{e}s and T.~Tao, ``Decoding by linear programming,'' \emph{IEEE Trans.
  Inf. Theory}, vol.~51, pp. 4203--4215, Dec. 2005.

\end{thebibliography}
\end{document}